\def\co{\colon\thinspace}
\DeclareMathOperator{\Dens}{\mathfrak{Dens}}
\newtheorem{theorem}{Theorem}
\newtheorem{proposition}{Proposition}
\theoremstyle{definition}
\newtheorem{example}{Example}
\newtheorem{remark}{Remark}
\newcommand{\R}{\mathbb{R}}
\newcommand{\la}{{\lambda}}
\newcommand{\ff}{{\varphi}}
\newcommand{\p}{\partial}
\newcommand{\w}{\hat{w}}
\newcommand{\lam}{\hat{\lambda}}
\newcommand{\g}{\gamma}
\newcommand{\wa}{{\widehat{\alpha}}}
\newcommand{\wb}{{\widehat{\beta}}}
\newcommand{\der}[2]{{\frac{\partial {#1}}{\partial {#2}}}}
\DeclareMathOperator{\Ber}{Ber}
\newcommand{\oder}[2]{{\frac{d {#1}}{d {#2}}}}
\newcommand{\RR}{\mathbb R}
\newcommand{\fun}{C^{\infty}}
\renewcommand{\t}{{\theta}}
\newcommand{\schw}{\mathfrak{S}}
\title[Factorization of generalized Sturm--Liouville operator]{Differential operators on the algebra of densities and factorization of  the generalized Sturm--Liouville operator}
\author{Ekaterina~Shemyakova}
\address{Department of Mathematics,  University of Toledo, Toledo,  Ohio, USA}
\email{ekaterina.shemyakova@utoledo.edu}
\author{Theodore~Voronov}
\address{School of Mathematics,  University of Manchester, Manchester,   UK}
\email{theodore.voronov@manchester.ac.uk}
\address{%
{Faculty of Physics, Tomsk State University, Tomsk, 634050, Russia}}
\date{15 (28) June    2018}
\begin{document}
\begin{abstract}
We consider factorization problem for differential operators on the commutative algebra of densities  (defined either algebraically or in terms of an auxiliary extended manifold)
introduced in 2004
by Khudaverdian and Voronov in connection with Batalin--Vilkovisky geometry.
We consider the case of the line, where unlike the familiar setting (where operators act on functions) there are obstructions for factorization. We analyze these obstructions.
In particular, we study the ``generalized Sturm--Liouville'' operators acting on the algebra of densities on the line.   This   in a certain sense is in between the 1D and 2D cases.
We establish a  criterion of factorizabily for the generalized Sturm--Liouville operator in terms of solution of the classical Sturm--Liouville equation. We also establish the possibility of an incomplete factorization.
\end{abstract}

\maketitle

\section{Introduction}

In this {letter} we consider differential operators on the algebra of densities. The algebra of densities is a {commutative algebra of functions of special form
defined on a formally extended (super)manifold, where a formal differentiation relative to the extra variable provides a new convenient way} to consider differential operators acting on densities of particular weights together. But also note that the original motivation for the development of the algebra of densities was the geometry of Batalin--Vilkovisky operators,
see the original paper~\cite{tv:laplace2}.

It has turned out that one may classify its derivations (obtaining an analog of Nijenhuis's theorem about derivations of differential forms), describe Poisson brackets, etc. It is also related with classical constructions of differential geometry such as projective connections and the notion of  {\textit{Thomas's bundle}}. See~\cite{tv:hovgroupoids, tv:algebradens}, \cite{jgeorge1},  \cite{biggs:lifting},  and \cite{hov:mattielap}.

{The interest in differential operators acting on densities comes from physics and also from integrable systems, and the work towards understanding   general structure of such operators has started in 90s.
Some classification results were obtained for the spaces of
differential operators of certain types acting on densities of different weights and considered as modules over Lie algebras of vector fields, and also for the mappings that relate these spaces,
e.g.~\cite{ovsienko_duval97,ovsienko_2007_circle}.}

{One of the works we pay more attention to in this note is
Duval and Ovsienko~\cite{ovsienko_duval97}, where
as part of such a classification of operators of order two and of weight zero, they discovered ``singularities" at weights $\la = 0, 1/2, 1$. An explanation of that was obtained in~\cite{biggs_hovik2013} in terms of \textit{canonical pencils} associated with self-adjoint operators on the algebra of densities introduced in~\cite{tv:laplace2} and motivated by BV geometry.
Duval and Ovsienko~\cite{ovsienko_duval97} also contains a
formula for the (unique) equivariant mapping connecting those spaces of differential operators (as modules over vector fields),  which led us into a new
natural derivation of the algebra of densities and a connection with this series of works, see  in~Sec.~\ref{subsec:duval}.
}

Factorization is a fundamental problem for ordinary and partial differential operators.
For operators in 1D, it is known that a factorization is always possible (in a large enough differential field extension)
and is closely related with solution of the corresponding differential equations. Starting from 2D, factorization is
{rarely possible}.
{Obstructions to factorizations describe algebraic properties} of the operators and are related with other important questions.
For example, the authors' particular further interest is the existence and types of Darboux transformations (see e.g. \cite{shemyakova2013_DT_fact, shem:darboux2}, \cite{shemya:hobby2016:iterated}, \cite{2013:invertible:darboux}). The case of  differential operators on the algebra of densities on the line is  in a certain sense ``in between'' the 1D and 2D cases, so it is interesting to see how an obstruction to factorization may arise. We show that unlike the 1D case, the operators on the algebra of densities $\Dens(\RR)$ are not always factorizable. Our main result is for \emph{generalized Sturm--Liouville operator},
which is a self-adjoint second order differential operator on the algebra $\Dens(\RR)$ on the line that specializes to the classical Sturm--Liouville when restricted on densities of weight $-1/2$.  We establish  a criterion for factorization of the  generalized Sturm--Liouville operator. It is formulated in terms of a density invariant (that we found) of the generalized Sturm--Liouville operator, and the criterion is that this density invariant should be  a solution of the corresponding classical Sturm--Liouville equation. We also establish the possibility of an incomplete factorization, which is unique.

{This letter} reports the first step in the project of unifying the study of algebraic properties of differential operators including factorization and Darboux transformations with their geometric analysis. Although the idea of the algebra of densities and differential operators on it first appeared in the context of supergeometry (from the study of odd Laplace operators), here we do not
consider the super case, which we
hope to address in the future (and to compare it with~\cite{shemya:voronov2016:berezinians}).

\section{Differential operators on the algebra of densities}

{The algebra of densities was discovered
by Khudaverdian and Voronov in~\cite{tv:laplace2} while
studying geometric structures related with the Batalin--Vilkovisky formalism. Here we recall
basic facts about this algebra, including differential operators  on it and their interpretation in terms of  the extended manifold, see~\cite{tv:laplace2,tv:algebradens}.} The generalized Sturm--Liouville operator on the algebra of densities  will be introduced in Sec.~\ref{subsec:genstuli}.

\subsection{Densities and the algebra of densities} Scalar and tensor densities are well known in differential geometry and  applications  in physics.
We shall deal with scalar densities. A (scalar) density of weight $\la\in \RR$
on a (super)manifold $M$ with local coordinates $x^a$ is a formal expression of the form
$\rho =   \rho(x) |Dx|^{\la}$.
The formal symbol $|Dx|$ under a change of coordinates transforms by
\begin{equation*}
    |Dx| = |Dx'| \, \left|\frac{Dx}{Dx'}\right|
\end{equation*}
and its formal power $|Dx|^{\la}$ transforms by
\begin{equation*}
    |Dx|^{\la}= |Dx'|^{\la}\, \left|\frac{Dx}{Dx'}\right|^{\la}\,.
\end{equation*}
Here we use the classical notation
${Dx}/{Dx'}$ for
the Jacobian, so  
\begin{equation*}
    \frac{Dx}{Dx'} =\det\der{x}{x'}
\end{equation*}
in the ordinary case
and
\begin{equation*}
    \frac{Dx}{Dx'} =\Ber\der{x}{x'}
\end{equation*}
in the supercase. (The main results of  this paper  are not concerned with supermanifolds, but we mention the supercase here for completeness;  also, the   algebra of densities, see below, arose first in the super context for the purposes of Batalin--Vilkovisky formalism.)
The line bundle whose sections are $\la$-densities is trivial topologically but is not trivialized canonically unless a special structure such as a volume element is fixed. A density of weight $0$ is a function. A density of weight $1$  is a (non-oriented) volume form, so  the integral $\int_M \rho(x) |Dx|$  over the manifold is well-defined, i.e.,  independent of a choice of coordinates; no orientation is required. ({One needs compact supports of course.}) For integral weights $\la=0,\pm 1,\pm 2, \ldots$, densities are sections of   tensor powers of the line bundle of volume forms. Densities of other weights   arise naturally in applications. Wave functions $\psi$ in quantum mechanics are densities of weight $\frac{1}{2}$ because   the squares of their  absolute values $|\psi|^2$ have the meaning of  probability densities and so  are densities of weight $1$. In an example that we shall consider, we shall naturally encounter densities of weights such as $-\frac{1}{2}$ and $\frac{3}{2}$.
Notation: $\Dens_{\la}(M)$ for the space of densities of weight $\la$ on $M$.

{Densities can be multiplied, so that the weights add
up:}
if $\psi=\psi(x)|Dx|^{\la}$ and $\chi=\chi(x)|Dx|^{\mu}$, then $\psi\chi=\psi(x)\chi(x)|Dx|^{\la+\mu}$.
In~\cite{tv:laplace2}, it was
observed that considering the direct sum
\begin{equation*}
    \Dens(M):= \bigoplus_{\la\in\RR} \Dens_{\la}(M)
\end{equation*}
as an algebra under such a multiplication extended by linearity to formal sums of densities of different weights  gives an  advantage similar to that  obtained by considering the  algebra of differential forms.
 The \emph{algebra of densities}  $\Dens(M)$ so defined can be analyzed in an abstract form as a commutative algebra with unit (e.g. considering its derivations and differential operators in the algebraic sense), as well as a particular algebra of functions (see more in the next subsection) where geometric  intuition can be applied. The algebra $\Dens(M)$ includes the algebra of functions,   $\fun(M)=\Dens_0(M)$, as a subalgebra, hence it in particular contains constants and the unit $1$.

The integral $I=\int_M$ can be extended by zero to densities of all weights $\neq 1$, giving a linear functional $I\co \Dens(M)\to \RR$,
which allows to
{extend the non-degenerate pairing of the spaces $\Dens_{\la}(M)$ and $\Dens_{1-\la}(M)$
(one can consider $\Dens_{1-\la}(M)$ as the ``smooth subspace''of the dual $\left(\Dens_{\la}(M)\right)^*$) to an invariant \emph{scalar product} on $\Dens(M)$\,,
\begin{equation}\label{eq.scalprod}
    \langle \psi, \chi\rangle := I(\psi\chi)\, ,
\end{equation}
where, strictly speaking, we consider the subalgebra of densities with compact support.}
More specifically, for $\psi\in\Dens_{\la}(M)$ and $\chi\in\Dens_{\mu}(M)$, define
\begin{equation}\label{eq.scalprod2}
    \langle \psi, \chi\rangle :=
    \begin{cases}
\int_M \psi\chi\,, & \la +\mu=1 \\
0 & \text{otherwise}
\end{cases}
\end{equation}
and then extend by bilinearity.

\subsection{{Densities as functions
on the extended manifold.}
}
\label{subsec:ovsienko}

The algebra $\Dens(M)$ can be interpreted~\cite{tv:laplace2} as a subalgebra of the algebra
of functions on an extended manifold $\hat M$, the total space of the frame bundle for the line bundle $|\Ber TM|$.
{The symbol $|Dx|$ can be taken as a basis element in the line bundle $|\Ber T^*M|$, which is dual to $|\Ber TM|$. It can be replaced by an invertible auxiliary variable $t$. }
The variables $x^a,t$ are local coordinates on $\hat M$ with the
transformation law $x^a=x^a(x'), t= |J| t'$, where $J=\Ber \der{x}{x'}$. Therefore an element of $\Dens(M)$ corresponds to a function of $x^a,t$,
\begin{equation*}
  \psi=  \sum \psi_{\la}(x)|Dx|^{\la} \longleftrightarrow \psi(x,t)= \sum \psi_{\la}(x)t^{\la} \ ,
\end{equation*}
a ``pseudo-polynomial'' in $t$.
By construction, the subalgebra $\Dens(M)\subset \fun(\hat M)$ is graded (by real numbers), unlike the whole algebra  $\fun(\hat M)$.

\begin{remark}  The fiber bundle $\hat M\to M$ was used in classical differential geometry  by T.~Y.~Thomas~\cite{thomas:projtheory} for  description of projective connections on a manifold $M$. The auxiliary manifold $\hat M$ is therefore sometimes referred to as the \emph{Thomas manifold} or \emph{Thomas bundle}. See  also~\cite{jgeorge1}. Unlike in that classical application, here the Thomas manifold $\hat M$ is taken  together with the graded algebra $\Dens(M)$ as its  algebra of functions and hence is regarded as a ``graded manifold''.
\end{remark}

Linear operators on the algebra $\Dens(M)$ and other objects associated with it naturally assume grading. They can be regarded as geometric objects on the graded manifold $\hat M$. Such are, for example, the graded derivations of $\Dens(M)$ (equivalently, graded vector fields on $\hat{M}$), which are classified  in~\cite{tv:laplace2} (see also \cite{tv:algebradens}).

\subsection{{Differential operators on the algebra of densities}}

Differential operators on $\Dens(M)$
can be introduced algebraically or in terms of local coordinates $x^a,t$ on $\hat M$. Consider the  operator $\w$
defined by
$\w (\rho) = \la \rho$ for $\rho \in \Dens_{\la}(M)$.
It is a derivation of   $\Dens(M)$ and so a differential operator of order $1$. It   is called the \emph{weight operator}. (Note a similarity with the  number   operator in quantum field theory.)  In coordinates,
\begin{equation}\label{eq.weightop}
    \w=t\der{}{t}\,.
\end{equation}
A linear operator $L$ on $\Dens(M)$ has weight $\mu$, i.e., for all $\la$, $L$ maps $\Dens_{\la}(M)$ to $\Dens_{\la+\mu}(M)$, if
\begin{equation}\label{eq.opweight}
    [\w, L]=\mu L\,.
\end{equation}
In particular, $\w$ itself has weight $0$ and because $t$ is invertible, the partial derivative in $t$ can be expressed via $\w$ as
\begin{equation*}
    \der{}{t}=t^{-1}\w\,.
\end{equation*}
Therefore
it is convenient to write {homogeneous}  differential operators on the algebra $\Dens (M)$ 
as
\begin{equation}\label{eq.do}
    L= t^{\mu}  \sum_{p+|q|\leq n} a_{pq}(x)\w^p\p_x^q\,,
\end{equation}
where $\p_x=\der{}{x^a}$ and $q$ is a multi-index. Formula~\eqref{eq.do} gives the (local) general form of a differential  operator of order $\leq n$ and weight $\mu$ on the algebra $\Dens (M)$\,. A special case of~\eqref{eq.opweight} is the commutation relation
\begin{equation}\label{eq.comm}
    [\w, t^{\mu}]=\mu t^{\mu}\,
\end{equation}
for the operator of multiplication by $t^{\mu}$. Note also the formulas for transformation under a change of coordinates $x^a=x^a(x')$, $t=t' |J|$: the operator $\w$ is invariant,  and the partial derivatives transform as
\begin{equation}\label{eq.derchange}
    \der{}{x^a}= \der{x^{a'}}{x^a}\, \der{}{x^{a'}} - \der{\ln |J|}{x^a}\,\w\,.
\end{equation}
Here $J=\Ber \der{x}{x'}$.

When a homogeneous operator on $\Dens(M)$ is restricted
on direct summands $\Dens_w(M)$, for all values of
$w\in \RR$, it gives a family of differential operators $\Dens_w(M)\to \Dens_{w+\mu}(M)$, where $\mu$ is the weight of $L$. Following~\cite{tv:laplace2}, we call
it
the \emph{operator pencil associated with $L$},
\begin{equation}\label{eq.dp}
    L_w= t^{\mu}  \sum_{p+|q|\leq n} a_{pq}(x)w^p\p_x^q\,.
\end{equation}
The \emph{order} of a pencil is counted jointly by the powers of $w$ and the order of the derivatives.

\begin{example}
An operator $L$ of order $1$ on $\Dens(M)$ corresponds to
 a linear pencil $L_w$:

\begin{equation}\label{eq.first}
    L= t^{\mu}\Bigl(A^a(x)\p_a+ B(x)\w +C(x)\Bigr)
    \longleftrightarrow
    L_w= |Dx|^{\mu}\Bigl(A^a(x)\p_a+ B(x)w +C(x)\Bigr) \,.
\end{equation}
Here $\p_a=\der{}{x^a}$.
\end{example}

When convenient, we may drop an indication to the manifold and write $\Dens$ or $\Dens_{\la}$.

\subsection{{Derivation of the algebra of densities from a different perspective.}}
\label{subsec:duval}

{The algebra of densities~\cite{tv:laplace2} was found in 2004 as the ``correct'' framework for the  Batalin--Vilkovisky $\Delta$-operator that uncovers
the geometry of this operator. The BV $\Delta$-operator on an odd symplectic manifold had generally been thought to be acting on functions and its definition required a choice of a  volume element~\cite{khudaverdian1991geometry}, \cite{schwarz1993geometry}, until it was discovered in 1999 that the  $\Delta$-operator is canonically defined  on semidensities~\cite{khudaverdian1999delta, khudaverdian2004semidensities}  without imposing any additional structure. Further investigations led from semidensities to the algebra of densities in~\cite{tv:laplace2}, where    the long-standing problem of describing all operators generating a given ``bracket" was fully solved.} H\"ormander's sub-principal symbol naturally fit  into this framework as a ``connection-type" geometric object (for densities of a fixed weight) and a part of the principal symbol on the extended manifold $\hat M$.

{In this letter, we are interested in other problems, but we have  adopted the algebra of densities as a convenient tool for working with differential operators acting on densities. Below we connect this approach with a large series of works by several authors who
studied interrelation between modules of differential operators acting on densities of different weights. A notable paper in this series is Duval and Ovsienko~\cite{ovsienko_duval97}, which considers the space $\mathcal{D}^2_\lambda$ of second-order linear differential operators of weight zero acting on}
densities of some particular weight $\lambda$,
\begin{equation} \label{eq:ovsienko_op_L}
L=a_2^{ij} \partial_i \partial_j + a_1^i \partial_i+a_0 \ , \quad
L: \Dens_\la \rightarrow \Dens_\la \ ,
\end{equation}
as a module over the Lie algebra of vector fields.
It turned out that for most  values of $\la$, all the    modules $\mathcal{D}^2_\lambda$ are isomorphic to each other. The only ``singular'' values of $\la$
are $\la=0,1/2,1$ for $\dim M \geq 2$ and
$\la=0,1$ for $\dim M = 1$. It was proved that there is a unique (up to a constant) equivariant linear mapping
\begin{equation*}
    \mathcal{L}_{\la\mu}^2:
    \mathcal{D}^2_{\mu} \rightarrow \mathcal{D}^2_{\la}
    \end{equation*}
and {coordinate} formulas for it were found. In particular, for $\dim \geq 2$,
the transformation $\mathcal{L}_{\la\mu}^2$ for
non-singular weights $\la$ and $\mu$  is given by
\begin{align*}
 &\widetilde{a}_2^{ij} = a_2^{ij} \ ,  \\
 &\widetilde{a}_1^l = \frac{2\la+1}{2\mu+1} a_1^l + 2 \frac{\mu-\la}{2\mu+1} \partial_i a_2^{il} \ , \\
& \widetilde{a}_0= \frac{\la(\la+1)}{\mu(\mu+1)} a_0 + \frac{\la(\mu-\la)}{(2\mu+1)(\mu+1)}
\left( \partial_i a_1^i - \partial_i \partial_j a_2^{ij} \right)  
\end{align*}
(formulas    do not depend on a choice of a local coordinate system).
These Duval--Ovsienko results can be used for the following  argument.
Since all the modules  $\mathcal{D}^2_\lambda$ corresponding to non-singular values of $\la$ are isomorphic, one is naturally led to the idea  that their elements should be studied as single objects depending on $\la$ as a parameter. So, if we fix some non-singular value of $\mu$
and some operator $L\in \mathcal{D}^2_\mu$, then by applying to it the Duval--Ovsienko transformation $\mathcal L_{\la \mu}$, we arrive at an  operator pencil $L_{\la}$ (i.e., a one-parameter family of operators $L_{\la}\in \mathcal D^2_{\la}$). 
 The general structure of the formula for $L_{\la}$ is that the
coefficients at $\partial_i \partial_j$
do not contain $\la$, the coefficients
at $\partial_i$ are linear in $\la$ and the free term  coefficients are quadratic in $\la$.
Therefore, our ``true objects'' obtained this way are characterized by the total order  in the derivatives $\partial_i$ and the parameter $\la$ together  (which should be $\leq 2$).

{Now, if we return to the algebra of densities, we see that to every differential operator on the algebra $\Dens$  corresponds
an operator pencil (see~\eqref{eq.dp}) which  has the desired structure of the coefficients automatically. Thus, if we write a second order operator on $\Dens$ as
\begin{equation*}
    \widehat{L} = A^{ij}_2 \p_i \p_j + A^i_1 \p_i + A_0  : \ \Dens \rightarrow \Dens \ ,
\end{equation*}
where $A^{ij}_2$ does not contain $\widehat{w}$, $A^i_1$ is of order $\leq 1$ in $\widehat{w}$,  $A_0$ is of order $\leq 2$ in $\widehat{w}$, and restrict the operator $\widehat{L}$ onto $\Dens_\la$,  we shall obtain differential operators $\Dens_{\la}\to \Dens_{\la}$, for each $\la$, whose coefficients depend on local coordinates $x^i$ and are polynomial in $\la$ with the same property as above.} 

Furthermore: the above   procedure based on Duval--Ovsienko formulas  gives operator  pencils that not only have  the  dependence on the parameter corresponding  to second-order differential operators on the algebra  $\Dens$, but a closer look at the formulas shows that these second-order operators on $\Dens$  also happen to be self-adjoint (i.e., exactly those studied in~\cite{tv:laplace2}). This was observed in~\cite{biggs_hovik2013}. The interesting question of what is happening at a singular value of parameter is considered in~\cite{hov:mattielap}.

\section{Factorization of differential operators on the algebra $\Dens(\R)$}

\subsection{Factorization of differential operators in the classical situation}

{Here we briefly overview the  factorization problem for differential operators in the classical setup. In it,   it is assumed that the operators act on functions, but  behaviour under changes of independent variables is not emphasized, and  the geometric nature of objects on which the operators act  is not really important in this context. We shall recall that the one-dimensional case and the higher dimensional case are very different.}

{
To illustrate  the situation in the one-dimensional case, consider the  simplest possible example of the  operator $L=\p^2$. It has a commutative factorization $L=\p \circ \p$ and
a one-parameter family of non-commutative factorizations
\begin{equation*}
    L=\p^2 = \left( \p + \frac{1}{x+c} \right) \circ \left( \p - \frac{1}{x+c} \right) \ .
\end{equation*}
One can show that these are the only possible factorizations.}
{In general,  
for an operator  $\displaystyle L= a_n \p^n + a_{n-1} \p^{n-1} + \dots  + a_1 \p + a_0$, where $a_i$ do not need to be constants,  there is Frobenius's theorem~\cite{frobenius} stating that factorizations into first-order factors are in one-to-one correspondence with maximal flags in the kernel of $L$.
For example, continuing with the operator  $L=\p^2$, one gets for a particular flag
\begin{equation*}
 \langle x\rangle \subseteq \langle  x, 1 \rangle = \ker L \quad \longleftrightarrow \quad   L= \left( \p + \frac{1}{x} \right) \circ \left( \p - \frac{1}{x} \right)\,,
\end{equation*}
which can be computed as $L=L_2 \circ L_1$, where $L_1 = x \circ \p \circ (x)^{-1}$ and then $\varphi_2=L_1(1)=-1/x$, and
$L_2=\varphi_2 \circ \p \circ \varphi_2^{-1}$.
}

{
Already in two dimensions, the situation is radically  different, as shown by the
famous  example of E.~Landau: for $P=\p_x+x\p_y$, $Q=\p_x+1$, $R=\p_x^2+x\p_x \p_y + \p_x +(2+x)\p_y$ (irreducible in any reasonable extension), we have
$QQP=RQ$, which means that an operator of third order has two completely different factorizations into irreducible factors. Noteworthy, even the numbers of the irreducible factors are different. In general, in the multidimensional case there is no general theory, but only some results for  particular situations that are available.
}

{Factorization of differential operators is a fundamental problem in mathematics, and there are, very broadly speaking, three main aspects here: analytic (where the  objects on which the operators act are analyzed as functions satisfying this or that analytic properties), algebraic and computational (where the  objects are treated as elements of some differential field and so all the  questions about smoothness, etc., are skipped), and geometric (where the geometric nature of objects is important, e.g. scalar  functions, densities,  differential forms, etc.).}

{Our   interest here is geometric and we shall be concerned with the factorization problem in the framework of the algebra of densities. In~\cite{tv:darbouxsuperline,shemya:voronov2016:berezinians},
among other results we considered factorization of differential operators on the superline (i.e., a manifold of dimension $1|1$). It turned out that for \textit{non-degenerate operators} of arbitrary order an analog  of the Frobenius theorem holds. Further,  in~\cite{tv:darbouxsuperline,shemya:voronov2016:berezinians}, it was shown that non-degenerate differential  operators in  dimension $1|1$ are similar to operators  in the one-dimensional case.}

\subsection{Factorization of differential operators on the algebra of densities. Obstruction}

{If we restrict ourselves to the algebra $\Dens(\R)$  
and   operators of order two on it (which we think is not such a restriction as the ideas for operators of orders larger than two probably will be similar to the ones of order two),
we shall  discover that the  situation with factorization of differential operators is  
very different here.}

{Now, the most general form for  factorization here will
be as follows:}
\begin{align}
    L & = t^{2}  \left(\partial^2 + \left(p_1 \hat{w} + p_0 \right) \partial +  q_2 \hat{w}^2 + q_1 \hat{w} + q_0 \right) \label{eq:L} \\
      & = t
\Bigl(  \partial - \alpha_1 \hat{w} - \alpha_0  \Bigr) \cdot t
\Bigl(  \partial - \beta_1 \hat{w}- \beta_0 \Bigr) \\
      & = t^2 \Bigl(  \partial - \alpha_1 (\hat{w}+1) - \alpha_0  \Bigr) \cdot
\Bigl(  \partial - \beta_1 \hat{w}- \beta_0 \Bigr) \ ,
\label{eq:fact}
\end{align}
where $p_i, q_i, \alpha_i, \beta_i$ are functions of $x$ and where the commutation rule  $\hat{w} t^\la=t^\la (\hat{w}+\la)$ was used in the last line. The negative signs are for convenience of further computations. Note that we consider monic operators of matching weights to ensure invariance under changes of coordinates.

{Up to the factor of  $t^2$, this may be viewed as a problem of factoring a univariate differential operator acting on functions of $x$, with coefficients depending on a parameter $w$, of a special form dictated by the homogeneity condition. So, morally it is ``in between" 1D and 2D cases.}

{To explore the similarity with the 1D case,} we switch to a  condensed notation $\hat{\alpha}=- \alpha_1 (\hat{w}+1) - \alpha_0$, $\hat{\beta}=- \beta_1 \hat{w}- \beta_0$, $\hat{p}=p_1 \hat{w} + p_0$, $\hat{q}=q_1 \hat{w} + q_0$, which hides
dependence on $\hat{w}$ and makes the problem look similar to the problem in the classical setting:
\begin{align*}
    L & = t^{2}  \Bigl(\partial^2 + \hat{p} \partial +  \hat{q} \Bigr) = t^2 \Bigl(\partial - \widehat{\alpha}\Bigr) \cdot
\Bigl(\partial - \widehat{\beta}\Bigr)
\end{align*}
This implies
$\wa=-\hat{p}-\wb$ and the familiar Riccati equation for $\wb$:
\begin{equation}
  -\wb'  = \wb^2 + \hat{p} \wb + \hat{q} \ .
\end{equation}

Warning: here  we are  applying  prime for the derivative in $x$  (and not for transformation of variables!) for a more expedient notation.
In the classical case, substitution
$
\wb =   \partial \left( \ln \ff \right)
$
is used to transform the Riccati equation into
\begin{equation} \label{eq:Leq}
 \ff'' + \hat{p} \ff' + \hat{q} \ff = 0 \ .
\end{equation}
That leads to the known fact that in the {classical situation}, every $\ff$ in the kernel of $L$ implies a factorization of $L=\partial^2 + \hat{p} \partial +  \hat{q}$.  {So in a suitable field of coefficients all 1D differential operators are factorizable.}

In the {densities case}, the
operator is factorizable \textit{only if} there is a solution $\ff$ of~\eqref{eq:Leq} of a certain special form, namely,  such that $\wb$ computed as $\wb =   \partial \left( \ln \ff \right)$ is linear in $\hat{w}$.

\begin{example}  The following operator on   the algebra of densities (which is a generalized Sturm--Liouville operator, see next section)  is \emph{not} factorizable:
$$\displaystyle L = t^{2} \cdot \left(\partial^2 -   \hat{w}^2 - \hat{w}  \right)\,.$$
See more in Sec.~\ref{sec:genSL} .
\end{example}

Let us now return to the non-condensed notation, and look
more closely at this Riccati. We had
\begin{align*}
    L & = t^{2}  \left(\partial^2 + \left(p_1 \hat{w} + p_0 \right) \partial +  q_2 \hat{w}^2 + q_1 \hat{w} + q_0 \right) \\
&= t^2 \Bigl(  \partial - \alpha_1 (\hat{w}+1) - \alpha_0  \Bigr) \cdot
\Bigl(  \partial - \beta_1 \hat{w}- \beta_0 \Bigr)
\end{align*}
After equating the coefficients we have: $\alpha_1=-p_1 - \beta_1$, $\alpha_0=
-p_0 + p_1 + \beta_1- \beta_0$,
and the single Riccati equation for $\wb=\beta_0+\beta_1\hat{w}$ splits into the three equations
\begin{align*}
    0&= \beta_1^2+ p_1\beta_1+q_2 \\
    - \beta_1'&=  2\beta_0\beta_1 + p_0\beta_1+p_1\beta_0+q_1 \\
    -\beta_0'&=
    \beta_0^2+p_0\beta_0+q_0
\end{align*}
So we have an overdetermined system in $\beta_0$ and $\beta_1$, and there should be
some compatibility condition of  vanishing of the obstruction to factorization.

\section{Generalized Sturm--Liouville operator} \label{sec.GSLO}

\subsection{The Khudaverdian--Voronov {canonical operator pencil}.}

Return temporarily to arbitrary dimension.
We shall  use one important feature of $\Dens(M)$, which is the invariant scalar product. This gives the
possibility of considering (formal) adjoint and self-adjoint operators.
First of all,
\begin{equation*}
    \w^*=1-\w\,, \quad \left(\p_a\right)^*=- \p_a\,, \quad t^*=t\,, \quad (f(x))^*=f(x),
\end{equation*}
(where $\p_a=\der{}{x^a}$).

In the sequel, we write all the  formulas for ordinary manifolds and even operators on them, for simplicity. (Though we still use {$Dx$} {as the notation for the coordinate volume element.
It is possible to write the formulas for the super case too, see~\cite{tv:laplace2}.)}

For a first order operator of weight $\mu$ given by~\eqref{eq.first},
we have
\begin{equation}\label{eq.firststar}
    L^*= t^{\mu}\Bigl(-A^a(x)\p_a-B(x)\w  -\p_aA^a(x)+(1-\mu)B(x) +C(x)\Bigr)\,.
\end{equation}
Hence it cannot be self-adjoint unless $A^a\equiv 0$ and $B\equiv 0$ (when it is effectively a zero-order operator). It is anti-self-adjoint if $-C=C-\p_aA^a-(1-\mu)B$\,. Note that $C=L(1)$. If there are no a priori conditions for $C$, then this simply defines $C$ in terms of $A^a$ and $B$, which are not subject to any constraints,
\begin{equation*}
    C=\frac{1}{2}\bigl(\p_aA^a -(1-\mu)B\bigr)\,.
\end{equation*}
If however we consider the particular case of operators with $C\equiv 0$, i.e., derivations
 \begin{equation}\label{eq.der}
    L= t^{\mu}\Bigl(A^a(x)\p_a+ B(x)\w\Bigr)\,,
\end{equation}
then $L^*=-L$ if and only if
\begin{equation*}
   \p_aA^a = (1-\mu)B\,,
\end{equation*}
which has different meanings for $\mu=1$ and for general $\mu\neq 1$. In the first case, we have the invariant ``divergence-free'' condition $\p_aA^a=0$ for the vector density $A=|Dx|A^a\p_a$ on $M$ (of weight $1$) and no restrictions for $B$. In the second case, $B$ is defined by $A^a$, and we have operators of the form
\begin{equation}
    L= t^{\mu}\Bigl(A^a(x)\p_a+ \frac{\p_aA^a}{1-\mu}\,\w\Bigr)\,,
\end{equation}
which are interpreted as ``generalized Lie derivatives''. (This is part of classification of the derivations of the algebra $\Dens(M)$ obtained in~\cite{tv:laplace2, tv:algebradens}.)

{For   second-order self-adjoint operators on $\Dens(M)$, the  analysis was done in~\cite{tv:laplace2} (see also~\cite{tv:algebradens}).}

\begin{theorem}[\cite{tv:laplace2}]
A self-adjoint second-order operator $L$ on the algebra of densities   $\Dens(M)$ satisfying the condition $L(1)=0$ (annihilating constants) is completely defined by the following data:
\begin{enumerate}
  \item a tensor density $S=|Dx|^{\mu}S^{ab}\p_a\otimes \p_b$ (the principal symbol of the restriction of $L$  on  functions);
  \item an object $\g^a$ interpreted as   ``upper connection'' coefficients, associated with $S$;
  \item an object $\theta$ called ``Brans--Dicke field'' (a terminology from physics), associated with $S$ and $\gamma^a$.
\end{enumerate}
The operator defined by such data is given by
\begin{equation}\label{eq.canoper}
    L=t^{\mu}\Bigl(S^{ab}\p_a\p_b +\bigl(2\g^a\w  + \p_aS^{ab}+(\mu-1)\g^b\bigr)\p_b +\theta \w^2+\bigl(\p_a\g^a+(\mu-1)\t\bigr)\w\Bigr)\,.
\end{equation}
The symmetric matrix
\begin{equation}
\hat S^{\hat a\hat b}=t^{\mu}
    \begin{pmatrix}
      S^{ab} & \g^a \\
      \g^b & \theta \\
    \end{pmatrix}
\end{equation}
gives the principal symbol of $L$ (as a quadratic form) on the extended manifold $\hat M$.
\end{theorem}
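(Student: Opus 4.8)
The plan is to begin from the general local form~\eqref{eq.do} of a homogeneous second-order operator of weight $\mu$,
\[
L = t^\mu\Bigl(S^{ab}\p_a\p_b + Q^a\,\w\p_a + \theta\,\w^2 + P^a\p_a + R\,\w + U\Bigr),
\]
with $S^{ab}=S^{ba}$, $Q^a$, $\theta$, $P^a$, $R$, $U$ unknown functions of $x$, and to impose the two conditions in turn. The condition $L(1)=0$ is immediate: $\w$ and $\p_a$ annihilate the unit, so $L(1)=t^\mu U$ and hence $U\equiv 0$. The substance of the theorem is in the self-adjointness condition $L^*=L$.

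For that, I would compute $L^*$ term by term from $(AB)^*=B^*A^*$, $(A+B)^*=A^*+B^*$ and the elementary rules recalled just before the theorem, $\w^*=1-\w$, $\p_a^*=-\p_a$, $t^*=t$, $f^*=f$, together with the commutation relations valid in a fixed chart, $[\w,t^\mu]=\mu t^\mu$, $[\w,\p_a]=0$, $[\p_a,f(x)]=\p_a f$. Bringing $L^*$ back to the normal-ordered form $t^\mu(\cdots\w^p\p_x^q)$ and equating coefficients with $L$ produces a linear system whose solution is that the top-degree coefficients are unconstrained --- write $S^{ab}$ (taken symmetric) for the $\p_a\p_b$ coefficient, $\gamma^a:=\tfrac12 Q^a$, and $\theta$ for the $\w^2$ coefficient --- while the first-order coefficients are forced, $P^b=\p_aS^{ab}+(\mu-1)\gamma^b$ and $R=\p_a\gamma^a+(\mu-1)\theta$, which is exactly~\eqref{eq.canoper}. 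This single computation yields everything at once: the map $L\mapsto(S,\gamma,\theta)$ is injective because $S,\gamma,\theta$ are literally coefficients of $L$; formula~\eqref{eq.canoper} is the unique admissible form; and the converse direction --- that for \emph{any} data the operator~\eqref{eq.canoper} is self-adjoint and kills constants --- comes for free, since the coefficient equations solved above are equivalent to $L=L^*$ and not merely necessary for it.

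For the principal-symbol claim I would extract the second-order part $t^\mu(S^{ab}\p_a\p_b+2\gamma^a\w\p_a+\theta\w^2)$ and pass to symbols on $\hat M$: treating $\p_a$ and the first-order operator $\w$ on equal footing, this is the quadratic form $S^{ab}\xi_a\xi_b+2\gamma^a\w\xi_a+\theta\w^2$, whose matrix in the variables $(\xi_a,\w)$ is precisely $t^{-\mu}\hat S^{\hat a\hat b}$ as displayed. I would close with a short remark --- not strictly part of the ``completely defined by'' statement, but explaining the terminology --- that inserting~\eqref{eq.derchange} into the coefficient formulas shows $S=|Dx|^\mu S^{ab}\p_a\otimes\p_b$ transforms as a weight-$\mu$ symmetric contravariant tensor density, and that the residual inhomogeneous transformation freedom left in $\gamma^a$ once $S$ is fixed, and in $\theta$ once $S$ and $\gamma^a$ are fixed, is exactly that of an ``upper connection'' associated with $S$ and of an associated scalar (the ``Brans--Dicke field'').

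The step I expect to be the genuine obstacle is the middle one: the adjoint computation and the ensuing reordering are mechanical but lengthy, and the factors of $(\mu-1)$ are easy to misplace --- they are precisely what makes the weights $\mu=1$ and $\mu\neq1$ behave differently. A clean way to contain the bookkeeping is to factor $L=t^\mu M$ with $M=\sum a_{pq}(x)\w^p\p_x^q$, observe $L^*=M^*t^\mu$, and rewrite $L^*=L$ as $t^{-\mu}M^*t^\mu=M$; since $t^{-\mu}\w t^\mu=\w+\mu$ while $t^{-\mu}\p_a t^\mu=\p_a$ and $t^{-\mu}f(x)t^\mu=f(x)$, all of the $\mu$-dependence is produced in one controlled place, and what remains is the genuinely $\mu$-free task of equating $M$ with the reordered $t^{-\mu}M^*t^\mu$.
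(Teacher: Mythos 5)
The paper does not actually prove this statement: it is quoted with attribution to \cite{tv:laplace2}, so there is no in-text proof to compare against. Your reconstruction is the standard direct argument and it is correct: writing $L=t^{\mu}\bigl(S^{ab}\p_a\p_b+Q^a\w\p_a+\t\w^2+P^a\p_a+R\w+U\bigr)$, using $L(1)=t^{\mu}U$ to kill $U$, and reducing $L^*=L$ to $t^{-\mu}M^*t^{\mu}=M$ via $t^{-\mu}\w t^{\mu}=\w+\mu$ does produce exactly the equations $2P^b=2\p_aS^{ab}+(\mu-1)Q^b$ and $2R=\p_aQ^a+2(\mu-1)\t$, hence \eqref{eq.canoper} with $\g^a=\tfrac12 Q^a$; the symbol claim is likewise right once one notes that $\w=t\p_t$ contributes $(t\pi)^2$ to the principal symbol on $\hat M$, so $w=t\pi$ is the correct second conjugate variable.

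One step you wave at as ``mechanical'' actually carries content and should be made explicit: equating coefficients in $L=L^*$ also yields a zeroth-order equation, namely $\p_a\p_bS^{ab}+(\mu-1)\p_aQ^a+(1-\mu)^2\t-\p_aP^a+(1-\mu)R=0$. If this were an independent constraint, the data $(S,\g^a,\t)$ would not be free and the ``completely defined by arbitrary such data'' part of the theorem would fail. You need to check (it is a two-line substitution using the already-derived expressions for $P^a$ and $R$) that this identity holds automatically; only then is the coefficient system genuinely equivalent to prescribing $(S,\g^a,\t)$ arbitrarily, which is what your ``the converse comes for free'' relies on. With that verification inserted, the proof is complete; the concluding remark on transformation laws correctly explains, via \eqref{eq.derchange} and \eqref{eq.translawsgt}, why $\g^a$ and $\t$ deserve the names ``upper connection'' and ``Brans--Dicke field,'' though as you say this is terminology rather than part of the assertion.
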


(Grouping $S^{ab}$, $\g^a$, $\t$ into the above matrix is the explanation of why the operator $L$ is parameterized this way. The theorem says, in brief: a second-order self-adjoint operator on the algebra $\Dens(M)$ annihilating constants is completely defined by its principal symbol.)

The coefficients $S^{ab}$, $\g^a$, $\t$ as geometric objects on $M$ are characterized  by their transformation laws under a change of coordinates: in ``new'' coordinates indicated by dash,
\begin{align}\label{eq.translawsgt}
    S^{a'b'}&=S^{ab}\,\der{x^{a'}}{x^a}\,\der{x^{b'}}{x^b}\,|J|^{\mu}\\
    \g^{a'}&= \Bigl(\g^{a}-S^{ab}\p_{b}\ln|J|\Bigr)\,\der{x^{a'}}{x^a}\,|J|^{\mu}\\
    \t'&=\Bigl(\t - 2\g^{a}\p_{a}\ln|J|+ S^{ab}\,\p_{a}\ln|J|\cdot\p_{b}\ln|J|\Bigr)\,|J|^{\mu}\,.
\end{align}
Here $J=\Ber\der{x}{x'}$.

\begin{example}  Objects of type $\g^a$ and $\t$ can be constructed from a  connection in the bundle of volume forms on $M$, as follows:
 \begin{align*}
    \g^a :=S^{ab}\g_a\\
    \t :=S^{ab} \g_a\g_b\,.
 \end{align*}
(Here $\g_a$ are the connection coefficients.)
If   $S^{ab}$ is invertible, then conversely, a connection $\g_a$ can be obtained from $\g^a$. But there will still be a freedom in defining $\t$.
(For a full analysis of the degrees of freedom there, see~\cite{tv:laplace2, tv:hovgroupoids}  and~\cite{hov:mattielap}.)
\end{example}

A self-adjoint operator $L$ on the algebra $\Dens(M)$ defined by~\eqref{eq.canoper} and specified by data $S^{ab}, \g^a, \t$ is called \emph{canonical operator}. (Its canonicity is in the fact that it requires less data than a general second-order operator and these data have a geometric meaning.) The corresponding quadratic operator pencil
\begin{equation}\label{eq.canpencil}
    L_w=|Dx|^{\mu}\Bigl(S^{ab}\p_a\p_b +\bigl(2\g^aw  + \p_aS^{ab}+(\mu-1)\g^b\bigr)\p_b +\theta w^2+\bigl(\p_a\g^a+(\mu-1)\t\bigr)w\Bigr)
\end{equation}
($w$ is the parameter)
is referred to as a \emph{canonical pencil}.

\subsection{Generalized Sturm--Liouville operator.}
\label{subsec:genstuli}
It is classically well-known (see~\cite{hitchin:hsw}) that the familiar Sturm--Liouville operator  $\p^2+u(x)$ on the line can be  uniquely described  by the requirements  that it keeps its form under a change of coordinate and is self-adjoint. (In particular, keeping the form without the first derivative implies that it \emph{cannot} be regarded as acting on functions.) Together these conditions force it to be
of the following form:
\begin{equation}\label{eq.classstuli}
   L_{-\frac{1}{2}}= |dx|^2\Bigl(\p^2 +u(x)\Bigr) : \quad \Dens_{-\frac{1}{2}}\to \Dens_{\frac{3}{2}} \, .
\end{equation}
(We   use $dx$ instead of $Dx$ for the case of the line.)

Comparison with the above indicates that this classical operator should be seen as the restriction of a canonical pencil  on the line.
This was first observed  in~\cite[\S 4.1]{tv:laplace2}. We can elaborate that as follows. 
\begin{proposition}
The form of a second order operator $L$ of weight $2$ on the algebra $\Dens(\RR)$, 
\begin{equation}\label{eq.genstuli}
    L=|dx|^{2}\Bigl(\p^2 +\g\bigl(2\w   +1\bigr)\p  +  \bigl(\t(\w +1)+ \g_x\bigr)\w\Bigr)\,,
\end{equation}
is invariant under changes of the independent variable $x$.  It is uniquely defined by the conditions for $L$ being monic (i.e., the coefficient $1$ at $\p^2$), self-adjoint and annihilating constants. The operator $L$ is specified by   geometric quantities $\g$ and $\t$ 
and specializes to the classical Sturm--Liouville operator at $w=-1/2$. 
\end{proposition}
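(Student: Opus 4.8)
The plan is to obtain the form~\eqref{eq.genstuli} directly from the Khudaverdian--Voronov classification theorem quoted above, specialized to the line and supplemented by the monic normalization; the three assertions (uniqueness, coordinate invariance, and the Sturm--Liouville specialization) then follow with very little extra work.

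First I would specialize the classification. Take $M=\RR$, so that there is a single index $a=b=1$, and weight $\mu=2$. The theorem says that \emph{every} self-adjoint second-order operator on $\Dens(\RR)$ annihilating constants has the form~\eqref{eq.canoper}, determined by the data $S=S^{11}$, $\g=\g^{1}$, $\t=\theta$, and conversely that every operator of that form has both of those properties. Substituting $a=b=1$, $\mu=2$ into~\eqref{eq.canoper} (with $\p_xS=S_x$) gives
\[
L=t^{2}\Bigl(S\,\p^2+(2\g\w+S_x+\g)\,\p+\t\,\w^2+(\g_x+\t)\,\w\Bigr),
\]
and requiring $L$ to be monic forces $S\equiv1$, hence $S_x=0$. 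Regrouping $\t\w^2+(\g_x+\t)\w=\bigl(\t(\w+1)+\g_x\bigr)\w$ and $S_x+\g+2\g\w=\g(2\w+1)$ turns this into precisely~\eqref{eq.genstuli}, now a two-parameter family parametrized by $(\g,\t)$. This settles uniqueness in both directions: \eqref{eq.genstuli} is monic by inspection and is self-adjoint and annihilates constants because it is the special case $S\equiv1$ of~\eqref{eq.canoper}; and \emph{any} $L$ with those three properties is, by the theorem together with $S\equiv1$, of the form~\eqref{eq.genstuli}. Distinct pairs $(\g,\t)$ give distinct operators, since $\g$ occurs as the coefficient of $\w\p$ and $\t$ as the coefficient of $\w^2$.

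Next, for invariance under a change $x\mapsto x(x')$ I would invoke the transformation laws~\eqref{eq.translawsgt}. In dimension one with $\mu=2$ the first of them reads $S'=S\,(dx'/dx)^{2}\,|J|^{2}$, and since $J=dx/dx'$ the Jacobian factors cancel, so $S'=S$; thus the normalization $S\equiv1$ is coordinate-independent, while the remaining two laws show that $L$ keeps the shape~\eqref{eq.genstuli} with $\g,\t$ replaced by the transformed $\g',\t'$. In particular $\g$ and $\t$ are genuine geometric objects on $\RR$ (they carry well-defined transformation laws), which is the sense in which $L$ is ``specified by geometric quantities''. A more hands-on verification is to substitute $\p_x=(dx'/dx)\,\p_{x'}-(\ln|J|)_x\,\w$ from~\eqref{eq.derchange} into~\eqref{eq.genstuli} and collect powers of $\w$ and $\p_{x'}$; this is routine and I would only sketch it. (A self-contained alternative to the whole existence/uniqueness argument, bypassing the theorem, is to start from the generic weight-two second-order operator $t^{2}\bigl(a\p^2+(b_1\w+b_0)\p+c_2\w^2+c_1\w+c_0\bigr)$, set $a=1$, deduce $c_0=0$ from $L(1)=0$, compute $L^{*}$ using $\w^{*}=1-\w$, $\p^{*}=-\p$, $t^{*}=t$ --- keeping track of the shift $\w\mapsto\w+2$ incurred when the factor $t^{2}$ is moved back past the $\w$'s --- and then impose $L^{*}=L$.) The adjoint bookkeeping in this alternative, and the small cancellation that makes ``monic'' a coordinate-invariant condition, are the only places where any real care is needed; the rest is mechanical.

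Finally, for the Sturm--Liouville specialization I pass to the associated pencil and put $w=-\tfrac12$ in
\[
L_w=|dx|^{2}\Bigl(\p^2+\g(2w+1)\,\p+\bigl(\t(w+1)+\g_x\bigr)w\Bigr).
\]
At $w=-\tfrac12$ the factor $2w+1$ vanishes, so the first-order term disappears, and the zero-order term collapses to $-\tfrac12\bigl(\tfrac12\t+\g_x\bigr)$; hence $L_{-1/2}=|dx|^{2}\bigl(\p^2+u\bigr)$ with $u=-\tfrac14\t-\tfrac12\g_x$, acting $\Dens_{-1/2}\to\Dens_{3/2}$, which is precisely the classical Sturm--Liouville operator~\eqref{eq.classstuli}. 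One may note in passing that every potential $u$ is realized (e.g.\ by $\g=0$, $\t=-4u$), so the restriction is onto the whole classical family, though this is not part of the statement.
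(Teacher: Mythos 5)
Your proposal is correct and follows essentially the same route as the paper: specialize the Khudaverdian--Voronov normal form~\eqref{eq.canoper} and the transformation laws to dimension one with $\mu=2$, observe that $S$ then transforms as an invariant so the monic normalization $S\equiv 1$ is coordinate-independent, regroup the coefficients to get~\eqref{eq.genstuli}, and note that $2w+1$ vanishes at $w=-\tfrac12$ to recover the classical Sturm--Liouville operator. The direct $L^{*}=L$ computation you sketch as an alternative is not in the paper but is a reasonable self-contained check.
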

\begin{proof}
Indeed, if we specialize  formulas~\eqref{eq.canoper} and~\eqref{eq.canpencil} for 1D, we shall obtain
\begin{equation}\label{eq.canoperline}
    L=t^{\mu}\Bigl(S\,\p^2 +\bigl(2\g\w  + \p S +(\mu-1)\g\bigr)\p  +\t  \w^2+\bigl(\p\g+(\mu-1)\t\bigr)\w\Bigr)\,,
\end{equation}
and
\begin{equation}\label{eq.canpencilline}
    L_w=|dx|^{\mu}\Bigl(S\,\p^2 +\bigl(2\g w  + \p S +(\mu-1)\g\bigr)\p  +\t   w^2+\bigl(\p\g+(\mu-1)\t\bigr) w\Bigr)
\end{equation}
for the corresponding operator pencil. Here   $\p=\p_x$ (differentiation in the single variable),  $\p S=\p(S)$, and  $\p\g=\p(\g)$). The transformation of the coefficients $S, \g, \t$ simplifies and takes the form:
\begin{align}\label{eq.translawline}
    S' &=S\,J^{\mu-2}\\
     \g'&= \Bigl(\g -S\,\p\ln J \Bigr)\,J^{\mu-1}\\
    \t'&=\Bigl(\t - 2\g\, \p\ln J + S\,(\p\ln J)^2\Bigr)\,J^{\mu}\,.
\end{align}
We have assumed that $J>0$ and have therefore replaced $|J|$ by $J$, which in this 1D case coincides with the derivative $\oder{x}{x'}$. It will be clarifying to write  the derivatives such as $\p f$ as $f_x$ indicating the variable explicitly and to elaborate further $\p \ln J$.
Since $J=\oder{x}{x'}$, we rewrite it via $J^{-1}=\oder{x'}{x}=x'_x$. Finally, we obtain:
\begin{align}\label{eq.translawlinesimpl}
    S' &=S\,(x'_x)^{2-\mu}\\
     \g'&= \Bigl(\g\,x'_x +S\,x'_{xx} \Bigr)\,(x'_x)^{-\mu}\\
    \t'&=\Bigl(\t\,(x'_x)^2 + 2\g\, x'_x x'_{xx} + S\,(x'_{xx})^2\Bigr)\,(x'_x)^{-\mu-2}\,.
\end{align}
In particular, if the operator $L$ has weight $\mu=2$, the top coefficient $S$ becomes invariant. (A peculiarity of   1D.) We may choose $S$ to be $1$ and hence arrive at~\eqref{eq.genstuli}. Clearly, at $w=-1/2$ the term with the first derivative vanishes and we obtain an operator of the classical Sturm--Liouville form.
\end{proof}

We shall call such an operator $L$ on $\Dens(\RR)$ given by~\eqref{eq.genstuli}, the \emph{generalized Sturm--Liouville operator}. The coefficients $\g$ and $\t$ of the generalized Sturm--Liouville operator transform by
\begin{align}\label{eq.translawgenstuli1}
     \g'&= \Bigl(\g\,x'_x + \,x'_{xx} \Bigr)\,(x'_x)^{-2}\\
    \t'&=\Bigl(\t\,(x'_x)^2 + 2\g\, x'_x x'_{xx} +  (x'_{xx})^2\Bigr)\,(x'_x)^{-4}\,. \label{eq.translawgenstuli2}
\end{align}
The restriction of $L$ on $\Dens_{-\frac{1}{2}}$ gives the classical Sturm--Liouville operator 
\eqref{eq.classstuli} mapping  $\Dens_{-\frac{1}{2}}$ to  $\Dens_{\frac{3}{2}}$\,,
with the potential $u=u(x)$   expressed via $\g$, $\t$ by
\begin{equation}\label{eq.u}
    u= -\frac{1}{2}\Bigl(\g_x+\frac{1}{2}\,\t\Bigr)\,.
\end{equation}

\begin{proposition} 
The transformation laws for $\g$ and $\t$ induce  the well-known  transformation law for the potential $u$,
\begin{equation}\label{eq.translawu}
    u'=\Bigl(u - \frac{1}{2}\schw_x x'\Bigr)\,(x'_x)^{-2}\,,
\end{equation}
where $\schw_x f$ denotes the Schwarzian  derivative of a function $f$ with respect to a variable $x$.
\end{proposition}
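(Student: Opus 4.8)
The plan is to derive \eqref{eq.translawu} directly by substituting \eqref{eq.u} into the known transformation laws \eqref{eq.translawgenstuli1}--\eqref{eq.translawgenstuli2} and simplifying, so that the Schwarzian appears as the residual combination of derivatives of $x'$. First I would write $u' = -\tfrac12\bigl(\g'_{x'} + \tfrac12 \t'\bigr)$, where now the prime on the subscript means differentiation with respect to the new variable $x'$. The immediate subtlety is that $\g'$ and $\t'$ in \eqref{eq.translawgenstuli1}--\eqref{eq.translawgenstuli2} are expressed as functions of $x$ through the quantities $x'_x$, $x'_{xx}$, whereas the derivative in $u'$ is $\p/\p x'$; so I would use $\p/\p x' = (x'_x)^{-1}\,\p/\p x$ to convert, i.e. $\g'_{x'} = (x'_x)^{-1}\,\p_x\!\bigl(\g'\bigr)$, treating $\g'$ as the explicit expression $\bigl(\g\,x'_x + x'_{xx}\bigr)(x'_x)^{-2}$.

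Next I would carry out the differentiation $\p_x\bigl[(\g\,x'_x + x'_{xx})(x'_x)^{-2}\bigr]$ by the product rule, producing terms involving $\g_x$, $\g$, and the first three derivatives $x'_x, x'_{xx}, x'_{xxx}$. Then I add $\tfrac12\t' = \tfrac12\bigl(\t (x'_x)^2 + 2\g\,x'_x x'_{xx} + (x'_{xx})^2\bigr)(x'_x)^{-4}$ and collect. The expected outcome is that the terms linear in $\g_x$ and $\t$ reassemble, via \eqref{eq.u}, into $u\,(x'_x)^{-2}$, the terms linear in $\g$ cancel identically, and the purely ``kinematic'' remainder — built only from $x'_x, x'_{xx}, x'_{xxx}$ — collapses to $-\tfrac12 (x'_x)^{-2}\,\schw_x x'$, where $\schw_x x' = \dfrac{x'_{xxx}}{x'_x} - \dfrac{3}{2}\Bigl(\dfrac{x'_{xx}}{x'_x}\Bigr)^2$.

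The main obstacle I anticipate is purely the bookkeeping: making the $\g$-dependent terms cancel cleanly and recognizing the leftover third-order combination as exactly the Schwarzian with the correct coefficient $-\tfrac12$. It is worth doing a sanity check rather than blind computation: since \eqref{eq.translawu} is the classical transformation law for the Sturm--Liouville potential and \eqref{eq.u} is precisely the formula relating $u$ to $\g,\t$, one could instead argue structurally — the restriction of $L$ to $\Dens_{-1/2}$ is the invariantly-defined operator \eqref{eq.classstuli}, and its self-adjoint, form-preserving character (Proposition on page with \eqref{eq.genstuli}) forces the coefficient $u$ to transform by the unique $1$-cocycle correction given by the Schwarzian. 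I would present the direct substitution as the proof and perhaps remark that it reproduces the classical law as a consistency check. A convenient simplification during the computation is to set $J = x'_x$ and work with $\ell := \ln J$, so that $x'_{xx}/x'_x = \ell_x$ and $x'_{xxx}/x'_x = \ell_{xx} + \ell_x^2$; then $\g' J^2 = \g J^{-1} + J^{-1}\ell_x$ wait — better to keep the explicit $x'_x$ form to match the statement, and only translate to $\ell$ internally if the cancellations become clearer. Either way the identity $\schw_x x' = \ell_{xx} - \tfrac12 \ell_x^2$ is the final recognition step.
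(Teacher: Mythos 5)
Your proposal is correct and follows essentially the same route as the paper: differentiate the transformation law for $\g$ with respect to $x'$ (via $\p_{x'}=(x'_x)^{-1}\p_x$), add $\tfrac12\t'$, observe the cancellation of the $\g\,x'_x x'_{xx}$ terms, and recognize the residual $x'_{xxx}x'_x-\tfrac32(x'_{xx})^2$ combination as $(x'_x)^2\,\schw_x x'$. The anticipated cancellations all occur exactly as you describe, so the plan goes through without modification.
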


\begin{proof} Consider the transformation laws for $\g$ and $\t$ given by~\eqref{eq.translawgenstuli1} and \eqref{eq.translawgenstuli2}. By differentiating   both sides of~\eqref{eq.translawgenstuli1} with respect to $x'$, we obtain
\begin{equation*}
    \g'_{x'}= \Bigl(\g_x(x'_x)^2 - \g x'_{x}x'_{xx} + x'_{xxx}x'_x - 2 (x'_{xx})^2\Bigr)(x'_{x})^{-4}\,.
\end{equation*}
Combining this with~\eqref{eq.translawgenstuli2}, we obtain directly
\begin{multline*}
    \g'_{x'}+\frac{1}{2}\,\t'= \Bigl(\g_x(x'_x)^2 - \g x'_{x}x'_{xx} + x'_{xxx}x'_x - 2 (x'_{xx})^2
    +   \frac{1}{2}\,\t (x'_x)^2 +  \g x'_{x}x'_{xx} +  \frac{1}{2}\,(x'_{xx})^2 \Bigr)(x'_{x})^{-4}=\\
    \Bigl(\bigl(\g_x+ \frac{1}{2}\,\t\bigr)(x'_x)^2   + x'_{xxx}x'_x - \frac{3}{2}\, (x'_{xx})^2\Bigr)(x'_{x})^{-4}=
    \Bigl(\g_x+ \frac{1}{2}\,\t    + \frac{x'_{xxx}}{x'_x} - \frac{3}{2}\, \Bigl(\frac{x'_{xx}}{x'_x}\Bigr)^{\!2}\;\Bigr)(x'_{x})^{-2}\,,
\end{multline*}
which is exactly
\begin{equation*}
    \g'_{x'}+\frac{1}{2}\,\t'= \Bigl(\g_x+ \frac{1}{2}\,\t    + \schw_xx'\;\Bigr)(x'_{x})^{-2}\,,
\end{equation*}
by the definition of the Schwarzian derivative:
\begin{equation*}
    \schw_x f = \frac{f_{xxx}}{f_x} - \frac{3}{2}\, \Bigl(\frac{f_{xx}}{f_x}\Bigr)^{\!2}\,,
\end{equation*}
for a function $f=f(x)$ (see, e.g.~\cite{ovsienko-tabachnikov:book2005}). This gives~\eqref{eq.translawu}\,.
\end{proof}

(The appearance of the Schwarzian derivative in the  transformation of the potential for the   Sturm--Liouville equation is  classical, see e.g.~\cite[\S 3.8]{hitchin:hsw}. Its geometric interpretation can be found in~\cite{ovsienko-tabachnikov:book2005}. The fact that this transformation law is a corollary for the 1D case of the transformation laws of an ``upper connection'' on volume forms $\g^{a}$ and  ``Brans--Dicke field'' $\t$ was discovered in~\cite{tv:laplace2}.)


\subsection{Factorization of the generalized Sturm-Liouville operator}
\label{sec:genSL}

In Sec.~\ref{subsec:duval} we observed that when modules of differential operators acting on densities of non-singular weights are studied, the  families of operators that  appear  correspond to self-adjoint operators   on the algebra of densities. There is one per family  self-adjoint operator acting between densities of a particular combination of weights ~\cite{tv:laplace2}. It corresponds to a singular value of the parameter.
Now, of course,   operators on the algebra of densities can be specialized to  operators acting on  densities of singular weights, just there will be no one-to-one correspondence.

Consider~\eqref{eq.genstuli}, {the generalized Sturm--Liouville operator}.
When specialized on
$\Dens_{-1/2}$, it becomes the classical Sturm-Liouville operator
$t^2\left( \p^2 +u \right)$
with the  potential
\begin{equation*}
   u=-\frac{1}{2}\left(\g_x+\frac{\theta}{2}\right) \,,
\end{equation*}
as we have just discussed. 
Rewriting everything in terms of $u$ instead of $\theta$ and an operator $\lam$ such  that $\w=-\frac{1}{2}+\lam$, we come to
\begin{equation} \label{eq.genstuli2}
  L=t^2\left(\p^2+ {2\g\lam}\p -2 (\g_x+2u)\lam^2 + \g_x\lam +  u \right)
\end{equation}
where the classical Sturm--Liouville now corresponds to $\la=0$.

\begin{theorem} \label{thm:main} The necessary and sufficient condition for the generalized Sturm-Liouville operator $L$ given by~\eqref{eq.genstuli} or~\eqref{eq.genstuli2} to be factorizable is that the quantity $\psi$ defined by
\begin{equation}\label{eq.psi}
    \psi:= \left(\g^2+2(\g'+2u)\right)^{-1/4}
\end{equation}
satisfies the classical Sturm-Liouville equation
\begin{equation}
    (\p^2+u)\psi=0\,.
\end{equation}
\end{theorem}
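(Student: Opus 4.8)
The plan is to specialize the factorization equations already derived in Section~\ref{sec:genSL} to the generalized Sturm--Liouville operator and to identify the resulting compatibility condition as the classical Sturm--Liouville equation for $\psi$.

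First I would read off the coefficients of~\eqref{eq.genstuli}: writing the generalized Sturm--Liouville operator in the form $t^2\bigl(\partial^2+(p_1\hat w+p_0)\partial+q_2\hat w^2+q_1\hat w+q_0\bigr)$ gives $p_1=2\gamma$, $p_0=\gamma$, $q_2=\theta$, $q_1=\theta+\gamma_x$, $q_0=0$. Hence the three factorization equations for $\beta_0,\beta_1$ become
\begin{align*}
0&=\beta_1^2+2\gamma\beta_1+\theta,\\
-\beta_1'&=2\beta_0\beta_1+\gamma\beta_1+2\gamma\beta_0+\theta+\gamma_x,\\
-\beta_0'&=\beta_0^2+\gamma\beta_0.
\end{align*}
Using $u=-\tfrac12(\gamma_x+\tfrac12\theta)$, i.e.\ $\theta=-2(\gamma'+2u)$, the first equation reads $\beta_1^2+2\gamma\beta_1=2(\gamma'+2u)$, so that $\beta_1=-\gamma\pm\sqrt{D}$ with $D:=\gamma^2+2(\gamma'+2u)$; note $D=\psi^{-4}$, i.e.\ $\psi=D^{-1/4}$.

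Next I would solve the system. The first equation pins $\beta_1$ (two branches), and the middle equation is \emph{linear} in $\beta_0$ (no $\beta_0'$ appears), so --- provided $\sqrt{D}\neq0$, i.e.\ $D\not\equiv0$ --- it determines $\beta_0$ algebraically; a short computation, using $\theta=\gamma^2-D$ from the first equation, yields
\[
\beta_0=-\frac{(\sqrt{D})'}{2\sqrt{D}}+\frac{\pm\sqrt{D}-\gamma}{2}=(\ln\psi)'+\frac{\pm\sqrt{D}-\gamma}{2},
\]
since $\psi=D^{-1/4}$ makes $(\ln\psi)'=-\tfrac14(\ln D)'=-(\sqrt{D})'/(2\sqrt{D})$. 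At this point there is no remaining freedom, so $L$ is factorizable if and only if this forced $\beta_0$ satisfies the third (Riccati) equation $\beta_0'+\beta_0^2+\gamma\beta_0=0$.

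The last step, and the \textbf{main computational obstacle}, is to substitute this $\beta_0$ into $\beta_0'+\beta_0^2+\gamma\beta_0$ and simplify. Writing $s:=\sqrt{D}=\psi^{-2}$ and using the identity $\tfrac14(s^2-\gamma^2)=\tfrac12\gamma'+u$ (which is merely the definition of $D$), the $\gamma$-terms and the terms of order $s$ combine and cancel --- in particular $(\ln\psi)'\,s+\tfrac12 s'=\psi'/\psi^3-\psi'/\psi^3=0$ --- leaving precisely
\[
\beta_0'+\beta_0^2+\gamma\beta_0=\frac{\psi''+u\psi}{\psi},
\]
and this holds independently of the sign branch. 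Hence the third equation holds exactly when $(\partial^2+u)\psi=0$, which proves both directions of the theorem (for the converse one runs the chain backwards, assembling the factors via $\alpha_1=-2\gamma-\beta_1$, $\alpha_0=\gamma+\beta_1-\beta_0$). Finally I would dispose of the degenerate case $D\equiv0$, i.e.\ $\theta\equiv\gamma^2$: there the middle equation becomes an identity, $\beta_0$ stays free, and $L$ is factorizable for \emph{any} solution of the Riccati $\beta_0'+\beta_0^2+\gamma\beta_0=0$; this is the situation underlying the incomplete, unique factorization treated separately, and in it $\psi=D^{-1/4}$ is not defined.
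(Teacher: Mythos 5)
Your proof is correct and follows essentially the same route as the paper: specialize the three factorization equations to $p_1=2\g$, $p_0=\g$, $q_2=\t$, $q_1=\t+\g_x$, $q_0=0$, solve the two algebraic ones for $\beta_1$ and then $\beta_0$, and recognize the one remaining equation as $(\p^2+u)\psi=0$. The only difference is that the paper first shifts to $\hat\lambda=\hat w+\tfrac{1}{2}$ (making $p_0=0$ and $q_0=u$, so that $b_0=\p\ln\psi$ and the last equation is literally the classical Riccati), whereas you keep $\hat w$ and must carry the extra term $\tfrac{1}{2}\bigl(\pm\sqrt{D}-\g\bigr)$ through the final cancellation $\beta_0'+\beta_0^2+\g\beta_0=(\psi''+u\psi)/\psi$ --- which I checked and is right; your closing aside on $D\equiv 0$ is a legitimate caveat that the paper leaves implicit, though that degenerate case is not the one behind the incomplete-factorization theorem, which also presupposes $\psi$ well defined and nonzero.
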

\begin{proof}
Consider the factorization problem $L=t^2(\p-\wa)(\p-\wb)$, which,
according to the general analysis in the previous section, is  parameterized by $\wb=b_0+b_1\lam$ and the remaining conditions are
\begin{align*}
    -b_0'&=b_0^2+u\\
    -b_1'&=2b_0 b_1 +2\g b_0+\g'\\
    0&=b_1^2+2\g b_1-2(\g'+2u)
\end{align*}
(here prime is used for the derivative).
From the third equation we get:
\begin{equation}\label{eq:4}
    b_1+\g=\pm \sqrt{\g^2+2(\g'+2u)}\,.
\end{equation}
$b_1+\g$ is exactly what we need for the second equation, which has the form
\begin{equation} \label{eq:5}
    -(b_1+\g)'=2b_0(b_1+\g)\, .
\end{equation}
But first let us solve~\eqref{eq:5} for $b_0$ (and then substitute the expression from \eqref{eq:4}):
\begin{multline*}
    b_0=-\frac{1}{2}\p\ln (b_1+\g)=
    -\frac{1}{2}\frac{\left(\pm\sqrt{\g^2+2(\g'+2u)}\right)'}{\pm\sqrt{\g^2+2(\g'+2u)}}=\\
    -\frac{1}{2}\p\ln\sqrt{\g^2+2(\g'+2u)}=
    \p\ln\left(\g^2+2(\g'+2u)\right)^{-1/4}\,.
\end{multline*}
Denoting
\begin{equation*} 
    \psi:= \left(\g^2+2(\g'+2u)\right)^{-1/4}
    \,,
\end{equation*}
the previous expression for $b_0$ becomes
$b_0=\p\ln\psi$.
What is left, is the first equation from the original system, which is the ``classical'' Riccati equation:
\begin{equation*}
    -b_0'=b_0^2+u\,,
\end{equation*}
equivalent to $\psi$ satisfying the   Sturm--Liouville equation with the potential $u$\,. 
\end{proof}

One immediate question concerns   invariance of the obtained condition under a change of coordinates. For that we need to consider the transformation law for the object $\psi$. Note that if we express back $u$ via $\g, \t$, then the expression for $\psi$ takes the form
\begin{equation}\label{eq.psiviagt}
    \psi= \left(\g^2-\t\right)^{-1/4}\,.
\end{equation}

\begin{theorem} The $-\frac{1}{2}$-density
\begin{equation*}
    \boldsymbol{\psi} := \left(\g^2-\t\right)^{-1/4} |dx|^{-1/2}
\end{equation*} 
does not depend on a  change of coordinate and hence
is an invariant of the generalized Sturm-Liouville operator $L$.
\end{theorem}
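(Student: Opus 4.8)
The plan is to establish the invariance by a short, direct computation with the transformation laws \eqref{eq.translawgenstuli1}--\eqref{eq.translawgenstuli2} for the coefficients $\g$ and $\t$. The key point is that, although $\g$ and $\t$ separately transform in an inhomogeneous, non-tensorial way (the offending terms being the ones with $x'_{xx}$), the particular combination $\g^2-\t$ that occurs in $\boldsymbol{\psi}$ transforms \emph{homogeneously}, acquiring only a power of $x'_x=dx'/dx$; that power is then exactly absorbed by the density factor $|dx|^{-1/2}$.

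Concretely, I would first square \eqref{eq.translawgenstuli1},
\[
   (\g')^2=\bigl(\g\,x'_x+x'_{xx}\bigr)^2(x'_x)^{-4}
          =\bigl(\g^2(x'_x)^2+2\g\,x'_x x'_{xx}+(x'_{xx})^2\bigr)(x'_x)^{-4},
\]
and subtract \eqref{eq.translawgenstuli2},
\[
   \t'=\bigl(\t(x'_x)^2+2\g\,x'_x x'_{xx}+(x'_{xx})^2\bigr)(x'_x)^{-4}.
\]
The anomalous terms $2\g\,x'_x x'_{xx}$ and $(x'_{xx})^2$ cancel, leaving
\[
   (\g')^2-\t'=\bigl(\g^2-\t\bigr)(x'_x)^{-2},
\]
and hence $\bigl((\g')^2-\t'\bigr)^{-1/4}=(\g^2-\t)^{-1/4}(x'_x)^{1/2}$. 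Here I use $x'_x>0$, consistent with the standing assumption $J>0$ under which the transformation laws were derived, and I work on the open locus where $\g^2-\t\neq 0$, so that the fourth root (like the square root in the proof of Theorem~\ref{thm:main}) is well defined.

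It remains to combine this with the transformation of the density symbol. Under $x=x(x')$ one has $|dx|=|dx'|(x'_x)^{-1}$, hence $(x'_x)^{1/2}|dx'|^{-1/2}=|dx|^{-1/2}$; in other words, the scalar $(\g^2-\t)^{-1/4}$ transforms precisely as the coefficient of a density of weight $-\tfrac12$. Evaluating $\boldsymbol{\psi}$ in the primed coordinates,
\[
   \bigl((\g')^2-\t'\bigr)^{-1/4}|dx'|^{-1/2}
   =(\g^2-\t)^{-1/4}\,(x'_x)^{1/2}|dx'|^{-1/2}
   =(\g^2-\t)^{-1/4}|dx|^{-1/2},
\]
which is exactly $\boldsymbol{\psi}$ in the original coordinates. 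Thus $\boldsymbol{\psi}$ is a globally well-defined $\bigl(-\tfrac12\bigr)$-density attached to $L$, and by Theorem~\ref{thm:main} the factorizability criterion it carries is coordinate-independent. I expect no genuine obstacle: the computation is entirely mechanical, and the only things to watch are the direction of the Jacobian ($x'_x$ versus $x_{x'}$), the sign of the exponent $-\tfrac14$, and the genericity caveat $\g^2-\t\neq 0$ noted above.
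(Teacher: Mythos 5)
Your proof is correct and follows essentially the same route as the paper: square the transformation law for $\g$, subtract that of $\t$ so the inhomogeneous $x'_{xx}$ terms cancel, obtain $(\g')^2-\t'=(\g^2-\t)(x'_x)^{-2}$, and observe that the resulting power of $x'_x$ is exactly absorbed by the weight $-\tfrac12$ of the density symbol (the paper phrases this via the invariant weight-$2$ density $\boldsymbol{\omega}=(\g^2-\t)|dx|^2$ and sets $\boldsymbol{\psi}=\boldsymbol{\omega}^{-1/4}$, which is the same computation). Your added caveats about $x'_x>0$ and $\g^2-\t\neq 0$ are sensible and consistent with the paper's standing assumptions.
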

\begin{proof}
We need to recall the transformation laws~\eqref{eq.translawgenstuli1} and~\eqref{eq.translawgenstuli2}. We   return to the notation where prime is used for ``new'' variables and quantities in new variables. We shall again use subscripts for derivatives. From~\eqref{eq.translawgenstuli1} and~\eqref{eq.translawgenstuli2}, we have
\begin{align*}
     (\g')^2&= \Bigl(\g^2\,(x'_x)^2 + 2\g\,x'_x\,x'_{xx} +(x'_{xx})^2\Bigr)\,(x'_x)^{-4}\,,\\
    \t'&=\Bigl(\t\,(x'_x)^2 + 2\g\, x'_x x'_{xx} +  (x'_{xx})^2\Bigr)\,(x'_x)^{-4}\,,
\end{align*}
and by subtracting we obtain after simplification
\begin{equation*}
    {\g'}^2-\t'=(\g^2-\t)\,(x'_x)^{-2}\,.
\end{equation*}
Hence the quantity $\boldsymbol{\omega}=(\g^2-\t)\,|dx|^2$ is invariant. It is a well-defined density of weight $2$. Therefore the quantity
\begin{equation*}
   \boldsymbol{\psi}= \psi \,|dx|^{-1/2}=\boldsymbol{\omega}^{-1/4}
\end{equation*}
is a well-defined density of weight $-1/2$. 
\end{proof}

By the properties of the classical Sturm--Liouville operator, the condition that 
$\boldsymbol{\psi}$ is a solution does not depend on a choice of    coordinate. This establishes the invariance of the factorization criterion.

For the same operator $L$ on the algebra $\Dens(\R)$, we can pose the problem of an ``incomplete factorization'':
\begin{equation*}
    L=t^2\left((\p-\wa)(\p-\wb)+f\right)\,,
\end{equation*}
where $f=f(x)$ does not contain $\w$. 
This   gives the system
\begin{align*}    -b_0'&=b_0^2+u-f\\   -b_1'&=2b_0 b_1 +2\g b_0+\g'\\  0&=b_1^2+2\g b_1-2(\g'+2u)
\end{align*}
for the unknowns $b_0$, $b_1$ and $f$.
A slight modification of the argument above leads us to the following theorem.

\begin{theorem} An incomplete factorization of the generalized Sturm--Liouville operator is always possible and  is unique. It  is given by the formulas
\begin{align*}
    b_0&=\p\ln\psi\,,\\
    b_1&=-\g\pm\frac{1}{\psi^2}\,,\\
    f&=\frac{1}{\psi}(\p^2+u)\psi\,.
\end{align*}
Here $\psi$ is as above. \emph{(We assume that $\psi$ is not identically zero.)} \qed
\end{theorem}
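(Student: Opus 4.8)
The plan is to follow the proof of Theorem~\ref{thm:main} almost verbatim, the only new feature being that the free function $f=f(x)$ enters exactly one of the three defining equations and so turns the former \emph{obstruction} into a \emph{definition}. I would start from the ansatz $L=t^2\bigl((\p-\wa)(\p-\wb)+f\bigr)$ with $\wb=b_0+b_1\lam$, which (exactly as for the complete factorization, $\wa$ being then determined and automatically linear in $\w$) reduces to the three equations already displayed before the statement,
\begin{align*}
-b_0'&=b_0^2+u-f\,,\\
-b_1'&=2b_0b_1+2\g b_0+\g'\,,\\
0&=b_1^2+2\g b_1-2(\g'+2u)\,,
\end{align*}
with $f$ appearing only in the first one.

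Next I would observe that the second and third equations are \emph{identical} to those in the proof of Theorem~\ref{thm:main}; they do not see $f$ at all. Hence the same manipulations go through: the third equation gives $(b_1+\g)^2=\g^2+2(\g'+2u)=\psi^{-4}$, i.e.\ $b_1=-\g\pm 1/\psi^2$, and the second equation, rewritten as $-(b_1+\g)'=2b_0(b_1+\g)$, yields $b_0=-\tfrac12\p\ln(b_1+\g)=\p\ln\psi$ irrespective of the sign. Now the first equation is no longer a compatibility condition: it simply reads $f=b_0'+b_0^2+u$. Substituting $b_0=\psi'/\psi$ gives $b_0'+b_0^2=\psi''/\psi$, so $f=\psi''/\psi+u=\tfrac1\psi(\p^2+u)\psi$, which is the asserted formula and is manifestly a function of $x$ alone.

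For \textbf{existence} it then remains to substitute $b_0,b_1,f$ back and verify the three equations — routine, since $\psi$ was built precisely so that the third one holds — under the standing hypothesis that $\psi$ is defined and not identically zero (equivalently $\g^2-\t\not\equiv 0$). For \textbf{uniqueness}: the derivation shows that $b_0$ and $f$ are forced outright, and $b_1$ is pinned down as soon as the sign in $b_1+\g=\pm\psi^{-2}$ is chosen, the two choices yielding the same $b_0$ and $f$. The only place where any care is needed is this benign sign ambiguity together with the non-vanishing assumption on $\psi$; everything else is a transcription of the earlier argument, so I do not expect a genuine obstacle here.
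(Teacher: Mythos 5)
Your proposal is correct and is essentially the argument the paper intends (the paper itself only says ``a slight modification of the argument above'' and marks the theorem with \qed): the second and third equations are untouched by $f$ and so force $b_1=-\g\pm\psi^{-2}$ and $b_0=\p\ln\psi$ exactly as in Theorem~\ref{thm:main}, after which the first equation defines $f=b_0'+b_0^2+u=\frac{1}{\psi}(\p^2+u)\psi$ instead of acting as an obstruction. Your remarks on the sign ambiguity (which does not affect $b_0$ or $f$) and on the non-vanishing of $\psi$ match the paper's own caveats.
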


One can draw some indirect analogy for these results with the Laplace (differential) gauge invariants $h$ and $k$, which
appear in the study of operators of the form
$\p_x \p_y + a\p_x+b\p_y +c$ in the classical setting,
the direction originally started in  Darboux's famous book~\cite{Darboux2}. The $h$ and $k$ are invariants relative to gauge transformations of the operator and are a complete system of invariants.
They were generalized for dimension three   in the recent work~\cite{athorne2018}. 
It would also be interesting to analyze in a similar way the case of differential operators on the algebra $\Dens(\R^{1|1})$  for  the superline. Compare~\cite{shemya:voronov2016:berezinians}.

\section*{Acknowledgement} This material is based upon work supported by the National Science Foundation under Grant No.1708033.
We are very thankful to Yuri Berest for
the stimulating discussion and providing a reference to the original publication containing the Frobenius theorem on factorization and   the anonymous referee for   remarks that helped to improve the exposition.



\end{document}